\definecolor{linkcolor}{HTML}{000000} 
\definecolor{urlcolor}{HTML}{0000EE} 
\definecolor{citecolor}{HTML}{000000} 
\DeclareMathOperator{\Cells}{Cells}
\newtheorem{theorem}{Теорема}
\newtheorem{hypo}{Гипотеза}
\newtheorem{problem}{Задача}
\newtheorem{lemm}{Лемма}
\newtheorem{idea}{Предложение}
\theoremstyle{definition}
\theoremstyle{remark}
\newtheorem{bytheway}{Замечание}
\newcommand{\eqdef}{\stackrel{\mathrm{def}}{=}}
\newcommand{\toto}{\rightrightarrows}
\newcommand*{\hm}[1]{#1\nobreak\discretionary{}%
	{\hbox{$\mathsurround=0pt #1$}}{}}
\newcommand*{\hide}[1]{}
\begin{document}


	
\author{Федоров Михаил Сергеевич\footnote{fedorov.mikhail.s@gmail.com, Национальный исследовательский институт Высшая школа экономики, Факультет математики}}
\title{Некоторые особенности распределения вероятностей протекания нескольких жидкостей на шестиугольной решетке.\\
}

\date{}
\maketitle
\emph{Аннотация.}
В работе рассматривается равномерная случайная раскраска клеток шестиугольной решетки в \(2^{n-1}\) цветов (стандартная модель Поттса при бесконечной температуре), которую можно рассматривать как обобщение перколяции на \(n\) жидкостей --- попарно-независимых, но зависимых в совокупности. В этой модели вводится новая наблюдаемая, которую можно интерпретировать как долю протекающих жидкостей. Для этой наблюдаемой доказывается аналог центральной предельной теоремы и формулируется несколько гипотез на основе численных экспериментов.

\emph{Ключевые слова и фразы:} перколяция, центральная предельная теорема, шестиугольная решетка, гауссово распределение, модель Поттса.

\section{Введение}
\begin{wrapfigure}[19]{r}{250pt}
	\includegraphics[scale=0.65]{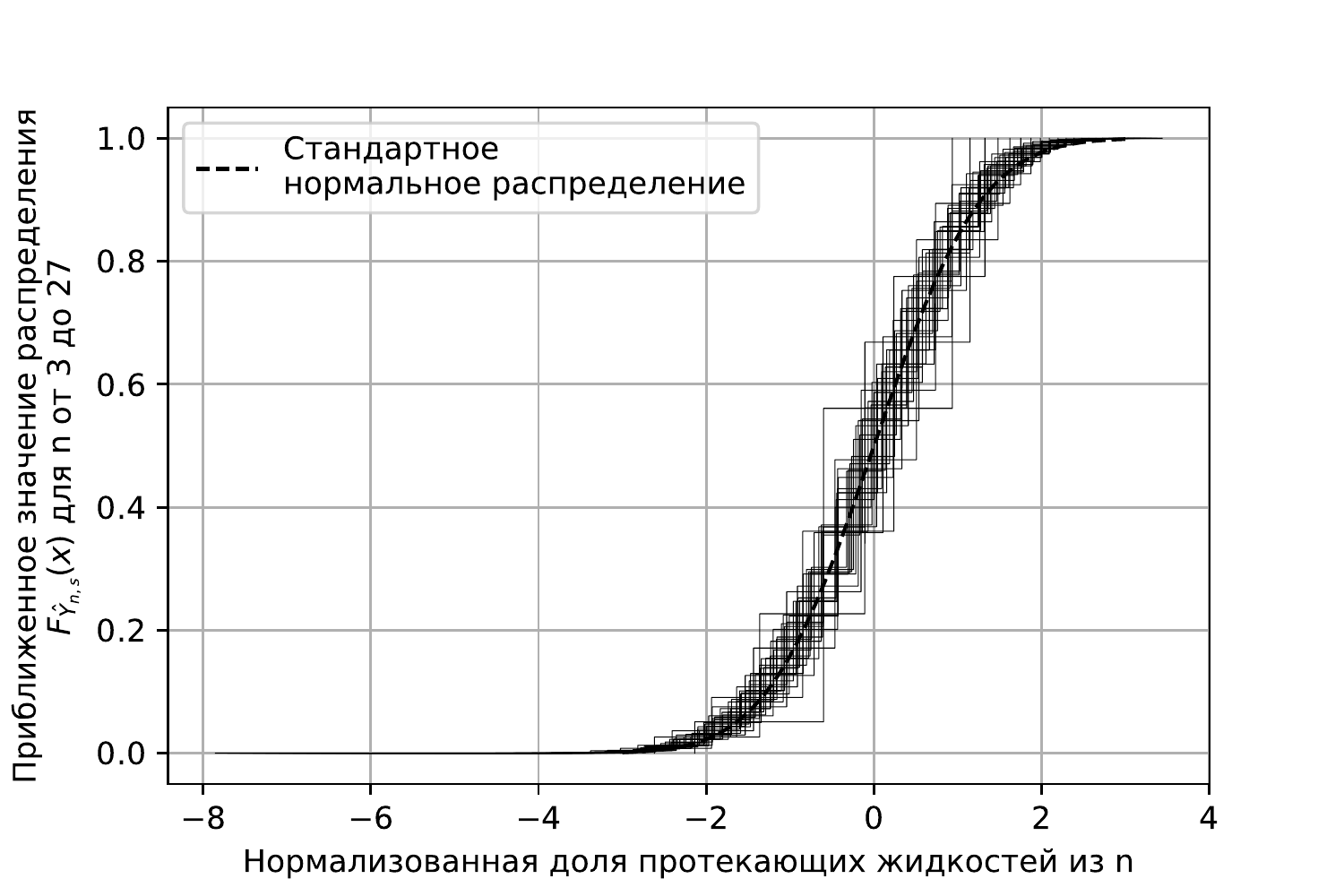}
	\caption{К теореме \ref{thm::LikeCLT}. Распределения величин \(\frac{\sqrt{n}(\overline X_{n, s} - p_s)}{\sqrt{p_s(1-p_s)}}\) сходятся к нормальному.}\label{fig::MainRes}
\end{wrapfigure}
 
Теория просачивания появилась во второй половине 20--го века. Она используется для моделирования различных физических процессов. Например, задачи перколяции возникают при исследовании свойств лесных пожаров и ферромагнетиков \cite{Intoduct,Efros}. Для обычной перколяции --- одной жидкости --- получено множество результатов, например, теорема Кестена \cite{Kesten} о том, что вероятность протекания между центром и границей круга стремится к нулю при увеличении радиуса. Обширный обзор и краткую историю теории перколяции можно найти в \cite{60_years}.


В работе исследуется модель протекания нескольких жидкостей, которая при минимальном их количестве совпадает с обычной перколяцией. Эта модель эквивалентна хорошо изученной стандартной модели Поттса \cite[\S 2]{60_years} с \(2^{n-1}\) состояниями при бесконечной температуре (знакомство с моделью Поттса для понимания статьи не требуется), в частности вероятностное пространство этой модели --- множество раскрасок клеток решетки в \(2^{n-1}\) цветов.

\begin{wrapfigure}[20]{r}{250pt}
	\includegraphics[scale=0.65]{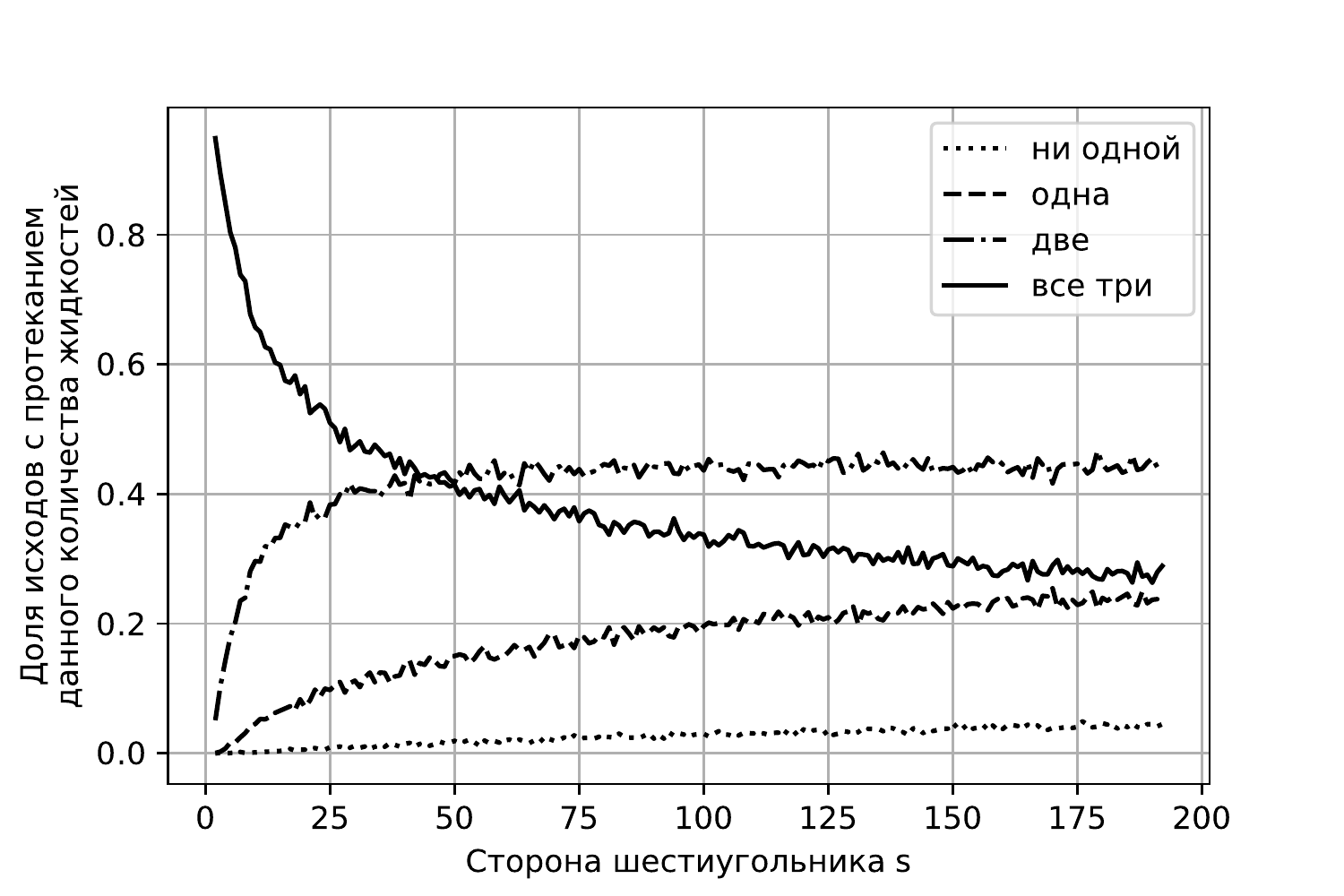}
	\caption{К гипотезе \ref{hp::grid}. Изменение вероятности протекания данного количества жидкостей при росте размера шестиугольника, если всего три жидкости.}\label{fig::s_to_infty}
\end{wrapfigure}

В данной работе вводится простейшая наблюдаемая, не ``сводящаяся'' к протеканию одной жидкости, --- доля протекающих жидкостей. Для распределения этой величины доказан аналог центральной предельной теоремы (теорема \ref{thm::LikeCLT} ниже).
Численные эксперименты (см. Рис. \ref{fig::MainRes} и \S\ref{sec::exper}) подтверждают этот результат.

Этот результат интересен, потому что рассматриваемые наблюдаемые попарно независимы, но зависимы в совокупности. Вообще говоря, для зависимых в совокупности величин центральная предельная теорема не выполняется,  красивые контрпримеры приводятся в \cite{CLT_fails_pairwise, CLT_fails_5-tuplewise}. Также этот результат перекликается с решением \(SU(N)\) калибровочной теории при \(N\to\infty\) \cite{Poliakov}.

В \S\ref{sec::res} вводятся определения и обозначения, необходимые для формулировки теоремы \ref{thm::LikeCLT}. В \S\ref{sec::my_prf} приводится доказательство это теоремы. В \S\ref{sec::exper} описан численный эксперимент и сформулированы задача \ref{hyp::IndependInLimit} и гипотеза \ref{hp::grid} (см. Рис. \ref{fig::s_to_infty}). В \S\ref{sec::big_sum} приводится формализм, который, надеемся, может пригодится для решения этих задач.

\section{Основной результат} \label{sec::res}

\begin{wrapfigure}[12]{r}{120pt}
	\includegraphics[scale=0.25]{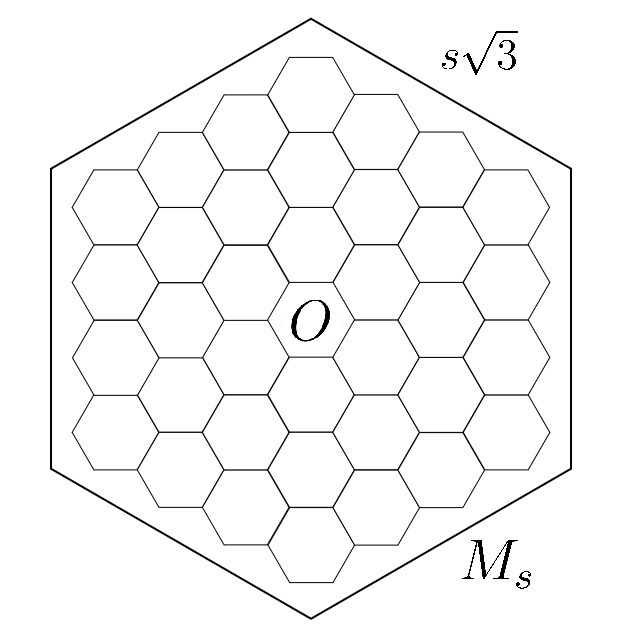}
	\caption{Множество \(M_s\)}\label{fig::grid}	
\end{wrapfigure}

	Для формулировки основного результата нам понадобятся следующие определения и обозначения (Рис. \ref{fig::grid}). 
	Для каждого целого \(s\geqslant2\) рассмотрим множество \(M_s\) всех клеток шестиугольной решетки со стороной клетки 1, расположенных внутри правильного шестиугольника со сторонами длины \(s\sqrt{3}\), перпендикулярными сторонам клеток, и с центром в центре клетки \(O\). Множество \(M_s\) содержит \(m\hm+1\hm\eqdef1+3s(s-1)\) клеток. Клетки, кроме центральной \(O\), пронумеруем: \(v_1, \ldots, v_m\). \textit{Соседними} клетками будем называть те, у которых есть общая сторона. \textit{Граничными} назовем те клетки, у которых в множестве \(M_s\) менее шести соседних.

	Для каждого целого \(n\geqslant2\) обозначим \textit{множество раскрасок клеток (кроме центральной) в \(2^{n-1}\) цветов}
\begin{equation*}
\Omega_{n, s}\eqdef\left\lbrace(f_1, \ldots, f_n) \colon \{v_1, \ldots, v_m\}\to \{0, 1\}^n\mid f_1 + \cdots + f_n \equiv 1\pmod2\right\rbrace.
\end{equation*}
	Заметим, что \(\left|\Omega_{n, s}\right| = 2^{(n-1)m}\).
	Элементы множества \(\Omega_{n, s}\) будем называть \textit{раскрасками} в \(2^{n-1}\) цветов. Смысл такого определения сейчас станет ясен.

	В дальнейшем мы будем рассматривать \( \Omega_{n, s}\) как вероятностное пространство с мерой \(P(A) = |A|2^{-(n-1)m}\) для всех \(A\subset\Omega_{n, s}\) 
	(оно изоморфно \textit{стандартной модели Поттса} с \(2^{n-1}\) состояниями при бесконечной температуре).
	
	\textit{Путем \(S\) от центра до границы} назовём последовательность различных клеток \(v_{j_1}, \ldots, v_{j_t}\) такую, что клетка \(v_{j_1}\) --- соседняя с \(O\), клетки \(v_{j_l}\) и \(v_{j_{l+1}}\) --- соседние для каждого \(l\geqslant1\), и клетка \(v_{j_t}\) является граничной.
	Пусть \(i\in\{1, \ldots, n\}\). Будем говорить, что \textit{\(i\)--я жидкость протекает от центра до границы} для данной раскраски \(f = (f_1, \ldots, f_n)\in\Omega_{n, s}\), если существует путь \(S=(v_{j_1}, \ldots, v_{j_t})\) от центра до границы, такой что для всех \(l=1,\ldots,t\) выполнено \(f_i(v_{j_l})=1\).
	
	\textit{Протеканием \(i\)--ой жидкости} назовем событие 
	\begin{equation*}
	A_{n, s, i}\eqdef\{f\in\Omega_{n, s}\mid i\text{--я жидкость протекает от центра до границы для раскраски }f\}.
	\end{equation*}
	Несложно видеть, что вероятность протекания жидкости не зависит ни от ее номера \(i\), ни от общего количества жидкостей \(n\). А именно, для всех \(s, n, q, i, j\) верно \(P(A_{n, s, i})=P(A_{n, s, j})\)  и \(P(A_{n, s, i})=P(A_{q, s, i})\). Обозначим эту вероятность \(p_s\eqdef P(A_{2,s,1})\). Кроме того, при \(n\geqslant3\) события протекания жидкостей с различными номерами попарно независимы (например, по лемме \ref{lem::UnionIndependance} ниже), но, вообще говоря, зависимы в совокупности (например, для \(n=3\) и \(s=2\)). Именно для того, чтобы присутствовала нетривиальная зависимость жидкостей в совокупности, мы рассматриваем раскраски в \(2^{n-1}\), а не в \(2^n\) цветов.

Определим случайную величину \(X_{n ,s, i}\colon\Omega_{n, s}\to\{0, 1\}\) формулой

	\begin{equation*}
	X_{n ,s, i}(f) \eqdef \begin{cases}
	1,& \substack{\text{если \(i\)--ая жидкость протекает от центра }\\ \text{до границы для раскраски \(f\);}}\\
	0,& \text{иначе}.
	\end{cases}
	\end{equation*} 
	
	Назовем \(\overline X_{n,s} \eqdef \frac{1}{n}\sum_{i=1}^n X_{n, s, i}\) \textit{долей протекающих до границы жидкостей}.
	
\begin{theorem}[Центральная предельная теорема для доли протекающих жидкостей] \label{thm::LikeCLT}
	Для любого \(s\) функция распределения случайной величины \(\dfrac{\sqrt{n}(\overline X_{n, s} - p_s)}{\sqrt{p_s(1-p_s)}}\) сходится равномерно к стандартному нормальному распределению при \(n\to\infty\).
\end{theorem}

Численные эксперименты (см. \S\ref{sec::exper}) подтверждают этот результат (Рис. \ref{fig::MainRes}).

\section{Доказательство}\label{sec::my_prf}
Для доказательства теоремы \ref{thm::LikeCLT} невозможно сразу применить центральную предельную теорему, потому что у нас нет одной последовательности случайных величин, а есть последовательность их наборов. Поэтому воспользуемся независимостью в совокупности всех величин, кроме одной, и покажем, что оставшаяся <<слабо>> влияет на функцию распределения. При этом мы не будем оценивать <<степень независимости>> всего набора величин --- сделать это сложнее, чем доказать саму теорему \ref{thm::LikeCLT} \cite{Novikov}.

Для доказательства теоремы \ref{thm::LikeCLT} понадобятся три несложные леммы.
\begin{lemm} \label{lem::UnionIndependance}
	Для любых \(n, s\) любое подмножество мощности \(n-1\) набора событий \(\{A_{n, s, 1}, \ldots, A_{n, s, n}\}\) есть набор событий, независимый в совокупности.
\end{lemm}
\begin{proof}
	Так как жидкости можно перенумеровать, то без ограничения общности рассмотрим набор \(\{A_{n, s, 1},\ldots,A_{n, s, n-1}\}\) и проверим, что для любого \(k=2,\ldots,n-1\) выполнено
	\begin{align*}
	P(A_{n, s, 1}\cap\ldots\cap A_{n, s, k})=P(A_{n, s, 1})\cdots P(A_{n, s, k}).
	\end{align*}
Протекание \(i\)--ой жидкости при раскраске \(f\) определяется функцией \(f_i\).
Легко видеть, что для каждого \(i\) имеется \(2^mp_s\) функций \(f_i\colon\{v_1,\ldots,v_m\}\to\{0,1\}\) таких, что \(i\)--ая жидкость протекает из центра до границы при раскраске \((f_1, \ldots, f_{i-1}, f_i, f_{i+1}, \ldots, f_n)\in\Omega_{n, s}\) для некоторого выбора функций \(f_1, \ldots, f_{i-1}, f_{i+1}, \ldots, f_n\). Тогда имеется \((2^m)^{n-1}(p_s)^{k}\) наборов функций \((f_1,\ldots,f_{k}, f_{k+1}, \ldots, f_n)\) таких, что жидкости \(1,\ldots,k\) протекают, так как для каждого набора \((f_1,\ldots,f_{k})\) функции \(f_{k+1},\ldots,f_{n-1}\) можно выбирать произвольно, а функция \(f_n\) восстанавливается однозначно по формуле
	\begin{equation*}
	f_n(v)=
	f_1(v)+\cdots+f_{n-1}(v)\pmod2.
	\end{equation*}
	Значит, \(P(A_{n, s, 1}\cap\ldots\cap A_{n, s, k}) \hm= 2^{m(n-1)}p_s^{k}/2^{m(n-1)} \hm=p_s^{k}\).
	Так как для каждого \(i\) верно \(P(A_{n, s, i}) = p_s\), то получаем требуемое равенство.
\end{proof}

\textit{Одновременным протеканием ровно \(k\) жидкостей из \(n\)} назовем событие 
\begin{equation*}
B_{n, s, k}\eqdef\left\lbrace f\in\Omega_{n, s}\left|\, \sum_{i=1}^n X_{n, s, i}(f) = k\right.\right\rbrace.
\end{equation*}

Легко видеть, что \(p_s = P(B_{2, s, 2})+\frac{1}{2}P(B_{2, s, 1})\).

\begin{lemm} \label{lem::base}		
	Для всех целых \(s\geqslant2\), \(n\geqslant3\) и \(1\leqslant i\leqslant n\) имеем 
	\begin{align*}
	\mathrm E \overline X_{n, s} = \mathrm EX_{n, s, i} &= p_s,\\
	\mathrm D \overline X_{n, s} = \mathrm DX_{n, s, i} &= p_s(1-p_s),\\
	\mathrm E|X_{n, s, i}-p_s|^3&=p_s(1-p_s)(1-2p_s+2p_s^2)
	<\infty.
	\end{align*}

	Для всех \(s\) имеем \(\lim\limits_{n\to\infty}\max\limits_k P(B_{n, s, k}) = 0\).
	
	Для всех \(k, n\) имеем \(\lim\limits_{s\to\infty}P(B_{n, s, k}) = 0\).
\end{lemm}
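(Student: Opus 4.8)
The plan is to treat the three clusters of claims separately, each resting on something already available: the indicator nature of $X_{n,s,i}$, the independence of Lemma~\ref{lem::UnionIndependance}, and a single decay estimate for one liquid. For the moment identities I would begin from $X_{n,s,i}=\mathbf 1_{A_{n,s,i}}$, a Bernoulli variable with parameter $P(A_{n,s,i})=p_s$ satisfying $X_{n,s,i}^2=X_{n,s,i}$; hence $\mathrm E X_{n,s,i}=p_s$ and $\mathrm D X_{n,s,i}=p_s-p_s^2=p_s(1-p_s)$ at once, while evaluating the third absolute central moment over the two values of the indicator gives
\begin{equation*}
\mathrm E|X_{n,s,i}-p_s|^3=(1-p_s)p_s^3+p_s(1-p_s)^3=p_s(1-p_s)(1-2p_s+2p_s^2),
\end{equation*}
finite since $p_s\in[0,1]$. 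For the sample mean, linearity gives $\mathrm E\overline X_{n,s}=p_s$, and because Lemma~\ref{lem::UnionIndependance} makes the family in particular pairwise independent all covariances vanish, so $\mathrm D\overline X_{n,s}=n^{-2}\sum_i\mathrm D X_{n,s,i}=n^{-1}p_s(1-p_s)$; it is $n$ times this variance that produces the factor $p_s(1-p_s)$ in the normalisation of Theorem~\ref{thm::LikeCLT}.

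For the spreading statement as $n\to\infty$ I would use the full strength of Lemma~\ref{lem::UnionIndependance}: the $n-1$ variables $X_{n,s,1},\dots,X_{n,s,n-1}$ are \emph{mutually} independent $\mathrm{Bernoulli}(p_s)$, so $T\eqdef X_{n,s,1}+\dots+X_{n,s,n-1}$ is genuinely $\mathrm{Bin}(n-1,p_s)$. Writing the full count as $S=T+X_{n,s,n}$ with $X_{n,s,n}\in\{0,1\}$ gives, for every $k$, $P(B_{n,s,k})=P(S=k)\leqslant P(T=k)+P(T=k-1)\leqslant 2\max_j P(T=j)$. Since $0<p_s<1$ for $s\geqslant2$ (one exhibits a configuration in which liquid $1$ reaches the boundary and one in which it does not), the largest atom of a binomial law is $O\!\left((n-1)^{-1/2}\right)$ by Stirling, whence $\max_k P(B_{n,s,k})=O(n^{-1/2})\to0$.

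The limit $s\to\infty$ I would reduce to the decay of a single liquid: for $k\geqslant1$ the event $B_{n,s,k}$ forces at least one liquid to percolate, so $P(B_{n,s,k})\leqslant P\!\left(\bigcup_{i}A_{n,s,i}\right)\leqslant n\,p_s$, and the entire family collapses to the single assertion $p_s\to0$ (the complementary mass piling up on $B_{n,s,0}$). This is the one genuinely analytic step and the place I expect the real difficulty. The marginal law of a single coordinate $f_i$ is uniform, that is, each cell is open independently with probability $\tfrac12$, so $A_{n,s,i}$ is precisely a radial one-arm crossing from the central cell to the boundary at distance of order $s$ in critical site percolation on the triangular lattice of cell centres. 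I would therefore extract $p_s\to0$ from the vanishing of the one-arm probability at criticality (equivalently, from the absence of an infinite cluster at $p_c=\tfrac12$); controlling this decay, rather than the elementary bookkeeping of the first two parts, is the step that demands care.
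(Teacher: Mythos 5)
Your proposal is correct, and in its middle part it takes a genuinely different route from the paper. The moment identities are the same elementary Bernoulli computation in both, except that you (correctly) get $\mathrm D\overline X_{n,s}=p_s(1-p_s)/n$: the printed identity $\mathrm D\overline X_{n,s}=\mathrm DX_{n,s,i}=p_s(1-p_s)$ is a misprint that the paper's proof simply asserts without computation, and it is harmless because the proof of Theorem~\ref{thm::LikeCLT} only ever uses $\sigma^2=\mathrm DX_{n,s,i}$. For $n\to\infty$ the paper does not stay inside $\Omega_{n,s}$: it embeds $\Omega_{n,s}$ into the $(n+1)$-liquid model $\Omega_{n+1,s}$ by appending a constant coordinate, where the first $n$ liquids are fully independent by Lemma~\ref{lem::UnionIndependance}, so counting gives $2^{m(n-1)}P(B_{n,s,k})\leqslant 2^{mn}\binom{n}{k}p_s^k(1-p_s)^{n-k}$, hence $\max_k P(B_{n,s,k})\leqslant 2^m\max_k\binom{n}{k}p_s^k(1-p_s)^{n-k}\to0$ by the local de Moivre--Laplace theorem. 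You instead use the $n-1$ mutually independent coordinates inside $\Omega_{n,s}$ itself, writing $S=T+X_{n,s,n}$ with $T\sim\mathrm{Bin}(n-1,p_s)$ and bounding each atom of $S$ by two atoms of $T$; this buys a cleaner, constant-free bound $2\max_jP(T=j)=O(n^{-1/2})$ (your check that $0<p_s<1$ is exactly what makes the Stirling estimate apply) and sidesteps the embedding, whose verification in the paper is in fact delicate: appending the identically-one function flips the parity $f_1+\cdots+f_{n+1}\equiv1\pmod 2$, so the map should use the identically-zero coordinate instead. The ingredients are the same --- Lemma~\ref{lem::UnionIndependance} plus the vanishing of the largest binomial atom --- but your deployment never leaves the model. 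For $s\to\infty$ the two arguments coincide in substance: you bound $P(B_{n,s,k})\leqslant np_s$ by a union bound where the paper gets $\binom{n}{k}p_s$ from its inclusion-exclusion decomposition, and both reduce everything to $p_s\to0$, which the paper also does not prove internally but cites as the Harris--Kesten theorem; your identification of $A_{n,s,i}$ with a one-arm event at criticality for site percolation on the triangular lattice of cell centres (legitimate, since the marginal of a single $f_i$ in $\Omega_{n,s}$ is uniform --- the same fact underlies the paper's Lemma~\ref{lem::UnionIndependance}) is precisely the content of that citation, with $\pi_1(s)\to0$ following from the absence of an infinite cluster at $p=\tfrac12$. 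One caveat you handle and should keep explicit: the claim $\lim_{s\to\infty}P(B_{n,s,k})=0$ can only hold for $k\geqslant1$, since the mass accumulates on $B_{n,s,0}$; the paper's own decomposition over $k$-element subsets silently assumes the same restriction.
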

\begin{bytheway}
	Последнее утверждение не используется для доказательства теоремы \ref{thm::LikeCLT}. В численном эксперименте при \(n=3, s<200\) стремления к нулю вероятностей \(P(B_{n, s, k})\) не видно. А именно, на рисунке \ref{fig::s_to_infty} эти вероятности возрастают для каждого \(k<3\).
\end{bytheway}
\begin{proof}

Во-первых, из линейности матожидания \(\mathrm E \overline X_{n, s} \hm= \mathrm EX_{n, s, i}=p_s\). А так как по лемме \ref{lem::UnionIndependance} для всех \(n\geqslant3, s\geqslant2\) события \(A_{n, s, i}\) попарно независимы, то \(\mathrm D \overline X_{n, s} \hm= \mathrm DX_{n, s, i}\hm= p_s(1-p_s)\). Также
\begin{equation*}
\mathrm E|X_{n, s, i}-p_s|^3\hm=p_s(1-p_s)^3+(1-p_s)p_s^3\hm=p_s(1-p_s)(1-2p_s+2p_s^2)<\infty.
\end{equation*}

Во-вторых, количество элементов множества \(\Omega_{n, s}\), для которых от центра до границы протекают ровно \(k\) жидкостей из \(n\), равно \(2^{m(n-1)} P(B_{n, s, k})\). 
Рассмотрим естественное вложение множества \(\Omega_{n, s}\) в \(\Omega_{n+1, s}\): элемент \((f_1, \ldots, f_n)\in\Omega_{n, s}\) переходит в элемент \((f_1, \ldots, f_{n+1})\in\Omega_{n+1, s}\), где \(f_{n+1}\) --- тождественно нулевая функция.  (Неформально, мы забываем про ограничение \(f_1+\cdots + f_n\equiv1\pmod2\).) В силу леммы \ref{lem::UnionIndependance}, ровно \(k\) жидкостей из первых \(n\) протекает для \(2^{mn}{n\choose k}p_s^k(1\hm-p_s)^{n-k}\) раскрасок в \(2^{n}\) цветов. В силу того, что при вложении раскраски в \(2^{n-1}\) цветов, для которых протекает ровно \(k\) жидкостей из \(n\), отображаются в раскраски в \(2^{n}\) цветов, для которых протекает ровно \(k\) жидкостей из первых \(n\), то \(2^{m(n-1)} P(B_{n, s, k})\leqslant2^{mn}{n\choose k}p_s^k(1\hm-p_s)^{n-k}\). А тогда
\begin{equation*}
\max_kP(B_{n, s, k})\hm\leqslant2^m\max_k\left({n\choose k}p_s^k(1\hm-p_s)^{n-k}\right)\to0 \text{ при }n\to\infty,
\end{equation*}
где последнее следует из теоремы Муавра--Лапласа \cite[том 1, \S. VII.3]{Feller:B-E_inq}.
	
В-третьих, по определению одновременного протекания,
\begin{align*}
P(B_{n, s, k}) &= \sum_{1\leqslant i_1<\cdots<i_k\leqslant n}P\left(A_{n, s, i_1}\cap\ldots\cap A_{n, s, i_k}\setminus\bigcup_{j\notin\{i_1,\ldots,i_k\}}A_{n, s, j}\right)\leqslant\\
&\leqslant\sum_{1\leqslant i_1<\cdots<i_k\leqslant n}P\left(A_{n, s, i_1}\cap\ldots\cap A_{n, s, i_k}\right) = {n\choose k}P(A_{n,s,1}\cap\ldots\cap A_{n, s, k})\leqslant\\
&\leqslant {n\choose k}P(A_{n, s, k}) \hm= {n\choose k}p_s\to0 \text{ при }s\to\infty,
\end{align*}
где последнее доказывается аналогично теореме Кестена~\cite[теор. 2]{Kesten_edges, Kesten}.
\end{proof}

Для фиксированного \(s\) определим случайные величины \(Y_{n, k} = X_{n, s, k} \hm- p_s\) для всех \(k=1,\ldots, n\). Обозначим \(\overline Y_{n, s} \eqdef \overline X_{n, s} - p_s = \frac{1}{n}\sum_{i=1}^n Y_{n, i}\) и \(\overline Z_{n, s} = \frac{1}{n-1}(Y_{n, 1}+\cdots+Y_{n, n-1})\).

Будем обозначать через \(F_X\) функцию распределения случайной величины \(X\), а через \(N(x)\) --- стандартное нормальное распределение.

\begin{lemm} \label{lem::UnifLim}
	Для каждого фиксированного \(s\) верно \(F_{\overline Z_{n,s}} - F_{\overline Y_{n,s}}\toto0\) на \(\mathbb R\) при \(n\to\infty\).
\end{lemm}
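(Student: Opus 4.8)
The plan is to exploit two facts: that $\overline Y_{n,s}$ and $\overline Z_{n,s}$ differ by only $O(1/n)$ \emph{for every outcome} $f\in\Omega_{n,s}$, and that $\overline Z_{n,s}$, being a rescaled binomial, spreads its mass over $\sim\sqrt n$ atoms none of which is heavy. The first fact is elementary; the second is a small-ball (anti-concentration) estimate and is the real content.

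First I would record the pathwise bound. Writing $S=Y_{n,1}+\cdots+Y_{n,n-1}$ we have $\overline Y_{n,s}=(S+Y_{n,n})/n$ and $\overline Z_{n,s}=S/(n-1)$, so
\[
\overline Y_{n,s}-\overline Z_{n,s}=-\frac{S}{n(n-1)}+\frac{Y_{n,n}}{n}.
\]
Since $Y_{n,i}=X_{n,s,i}-p_s$ takes values in $\{-p_s,\,1-p_s\}$, we have $|Y_{n,i}|\le1$ and $|S|\le n-1$, hence $|\overline Y_{n,s}-\overline Z_{n,s}|\le\delta_n\eqdef 2/n$ identically on $\Omega_{n,s}$.

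Next I would turn this into a comparison of distribution functions. For any two random variables $U,V$ with $|U-V|\le\delta$ surely one has $F_V(x-\delta)\le F_U(x)\le F_V(x+\delta)$, and therefore
\[
\bigl|F_{\overline Y_{n,s}}(x)-F_{\overline Z_{n,s}}(x)\bigr|\le F_{\overline Z_{n,s}}(x+\delta_n)-F_{\overline Z_{n,s}}(x-\delta_n)=P\bigl(x-\delta_n<\overline Z_{n,s}\le x+\delta_n\bigr)
\]
for every $x$. It remains to bound this last probability uniformly in $x$. By Lemma \ref{lem::UnionIndependance} the variables $X_{n,s,1},\dots,X_{n,s,n-1}$ are independent $\mathrm{Bernoulli}(p_s)$, so $\overline Z_{n,s}$ takes the equally spaced values $\frac{k-(n-1)p_s}{n-1}$ ($k=0,\dots,n-1$, gap $\tfrac1{n-1}$) with masses $\binom{n-1}{k}p_s^{\,k}(1-p_s)^{\,n-1-k}$. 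The window $(x-\delta_n,\,x+\delta_n]$ has length $4/n$ and hence meets at most five of these atoms, so
\[
P\bigl(x-\delta_n<\overline Z_{n,s}\le x+\delta_n\bigr)\le 5\max_k\binom{n-1}{k}p_s^{\,k}(1-p_s)^{\,n-1-k}.
\]
Because $0<p_s<1$ for every finite $s$, the de Moivre--Laplace / Berry--Esseen estimate already invoked in the proof of Lemma \ref{lem::base} gives $\max_k\binom{n-1}{k}p_s^{\,k}(1-p_s)^{\,n-1-k}=O(1/\sqrt n)\to0$, whence $\sup_{x\in\mathbb R}\bigl|F_{\overline Z_{n,s}}(x)-F_{\overline Y_{n,s}}(x)\bigr|\to0$, which is the asserted uniform convergence $F_{\overline Z_{n,s}}-F_{\overline Y_{n,s}}\toto0$.

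The point I would be most careful about is the last estimate, and it is worth saying why. Both $\overline Y_{n,s}$ and $\overline Z_{n,s}$ converge in distribution to the point mass at $0$, so neither family of distribution functions converges uniformly on its own --- the common limit is discontinuous at $0$. What makes the \emph{difference} converge uniformly is precisely that $\overline Z_{n,s}$ puts no more than $O(1/\sqrt n)$ mass on any window of width $O(1/n)$; the cheap pathwise bound $|\overline Y_{n,s}-\overline Z_{n,s}|\le 2/n$ then transfers this to the two distribution functions. The anti-concentration of the rescaled binomial is exactly the place where $0<p_s<1$ (and the local limit theorem) is used.
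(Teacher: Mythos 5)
Your proof is correct, and it takes a genuinely different route from the paper's. The paper's argument never leaves the common probability space: it compares the two step functions interval by interval (on \(\left[\frac{k-1}{n-1}-p_s,\frac{k}{n}-p_s\right)\) and \(\left[\frac{k}{n}-p_s,\frac{k}{n-1}-p_s\right)\)), splits each event \(B_{n,s,j}\) according to whether the \(n\)-th liquid percolates (\(A_{n,s,n}\) versus its complement \(C_{n,s,n}\)), and from the resulting telescoping identities extracts the clean coupling bound \(\sup_x|F_{\overline Z_{n,s}}(x)-F_{\overline Y_{n,s}}(x)|\leqslant\max_k P(B_{n,s,k})\), which tends to zero by the second assertion of Lemma \ref{lem::base}. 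You instead keep from the coupling only the sure bound \(|\overline Y_{n,s}-\overline Z_{n,s}|\leqslant 2/n\) and replace the paper's bookkeeping by a standard smoothing (L\'evy-distance) argument plus anti-concentration: the window \((x-2/n,\,x+2/n]\) meets at most five atoms of the law of \(\overline Z_{n,s}\), which is an exact rescaled binomial because any \(n-1\) of the indicators are jointly independent by Lemma \ref{lem::UnionIndependance}, and each atom has mass \(O(1/\sqrt{n})\) by de Moivre--Laplace since \(0<p_s<1\) for fixed \(s\). What each approach buys: yours bypasses the second claim of Lemma \ref{lem::base} entirely (whose proof in the paper goes through an embedding of \(\Omega_{n,s}\) into the \(2^n\)-color model and carries a factor \(2^m\)), is shorter, and gives the explicit rate \(O(1/\sqrt{n})\) with constant \(5\) instead of \(O(2^m/\sqrt{n})\); the paper's version, at the price of the case analysis, yields the sharper structural inequality \(|F_{\overline Z_{n,s}}-F_{\overline Y_{n,s}}|\leqslant\max_kP(B_{n,s,k})\) with no constant and records the identities involving \(A_{n,s,n}\) that illuminate how the two distributions actually interlace. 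Two small points you should state explicitly but which are immediate: that \(0<p_s<1\) for every finite \(s\) (also needed by the paper for Berry--Esseen, since \(\sigma^2=p_s(1-p_s)>0\)), and that joint independence of the events \(A_{n,s,1},\ldots,A_{n,s,n-1}\) gives joint independence of their indicator variables.
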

\textit{Доказательство.}(См. Рис. \ref{fig::for_proof})
	Обозначим через \(C_{n,s,k}\hm\eqdef\Omega_{n, s}\setminus A_{n,s,k}\) --- событие ``непротекания'' \(k\)--ой жидкости.
	Заметим, что так как \(\overline Z_{n, s}\) --- дискретная величина, принимающая значения \(-p_s, \frac{1}{n-1}\hm-p_s,\ldots, \frac{n-2}{n-1}\hm-p_s, 1\hm-p_s\), то \(F_{\overline Z_{n,s}}\) постоянна на всех полуинтервалах \(\left[\frac{k-1}{n-1}-p_s,\frac{k}{n-1}-p_s\right)\), где \(k\in\{1,\ldots,n-1\}\).
	
	\begin{wrapfigure}[10]{r}{200pt}
		\includegraphics[scale=0.5]{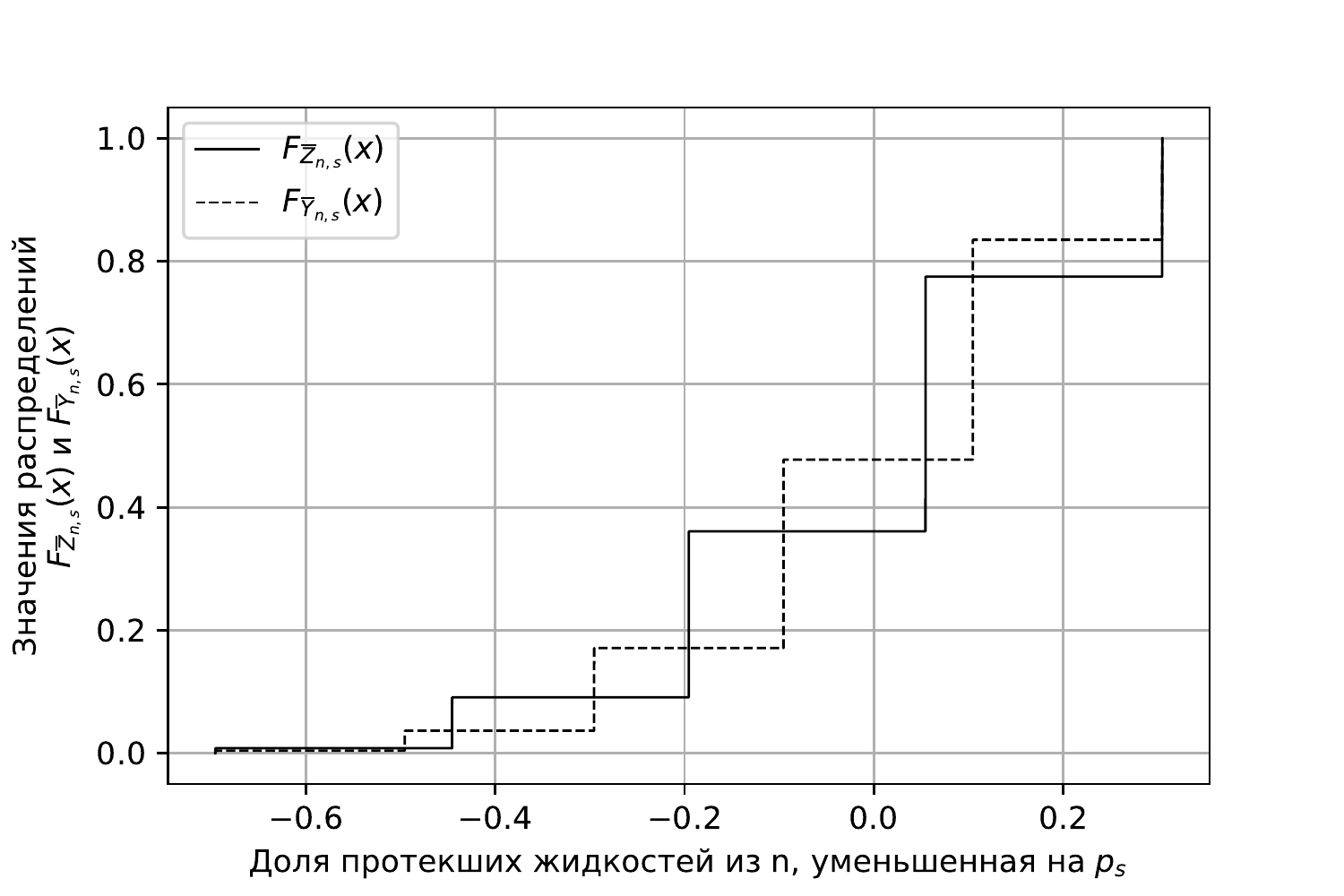}
		\caption{Графики \(F_{\overline Z_{n,s}}\) и \(F_{\overline Y_{n,s}}\) }\label{fig::for_proof}
	\end{wrapfigure}
	Тогда, так как \(\frac{k}{n}\hm\in\left[\frac{k-1}{n-1}, \frac{k}{n-1}\right)\), то
	\begin{multline*}
	F_{\overline Z_{n,s}}\left(\frac{k}{n}-p_s\right)\hm-F_{\overline Y_{n,s}}\left(\frac{k}{n}-p_s\right)\hm=\\
	=F_{\overline Z_{n,s}}\left(\frac{k-1}{n-1}-p_s\right)\hm-F_{\overline Y_{n,s}}\left(\frac{k}{n}-p_s\right).
	\end{multline*}
	Кроме того, по определению, \(F_{\overline Z_{n,s}}\left(\frac{k-1}{n-1}-p_s\right)\) равно вероятности того, что среди первых \(n\hm-1\) жидкостей протекает не больше \(k-1\), а именно
	\begin{multline*}
	F_{\overline Z_{n,s}}\left(\frac{k-1}{n-1}-p_s\right)=\sum_{0\leqslant j\leqslant k}(P(B_{n, s, j}\cap A_{n, s, n})+\\+P(B_{n, s, j-1}\cap C_{n, s, n})),
	\end{multline*}
	где считаем \(B_{n, s, -1} = \varnothing\), и аналогично,
	\begin{multline*}
	F_{\overline Y_{n,s}}\left(\frac{k}{n}-p_s\right)= \sum_{0\leqslant j\leqslant k}P(B_{n, s, j}) 	= \sum_{0\leqslant j\leqslant k}(P(B_{n, s, j}\cap A_{n, s, n})+P(B_{n, s, j}\cap C_{n, s, n})).
	\end{multline*}
Вычитая, находим
	\begin{equation*}
	F_{\overline Z_{n,s}}\left(\frac{k-1}{n-1}-p_s\right)\hm-F_{\overline Y_{n,s}}\left(\frac{k}{n}-p_s\right) = -P(B_{n, s, j}\cap C_{n, s, n})
	\end{equation*}
	Таким образом, получаем
	\begin{equation*}
	0\geqslant F_{\overline Z_{n,s}}\left(\frac{k}{n}-p_s\right)\hm-F_{\overline Y_{n,s}}\left(\frac{k}{n}-p_s\right)
	\geqslant-P(B_{n, s, k})\geqslant-\max_kP(B_{n, s, k}).
	\end{equation*}
	Аналогично, так как \(F_{\overline Y_{n,s}}\) постоянна на \(\left[\frac{k}{n}-p_s,\frac{k+1}{n}-p_s\right)\ni \frac{k}{n-1}-p_s\),
	\begin{multline*}
	F_{\overline Z_{n,s}}\left(\frac{k}{n-1}-p_s\right)\hm-F_{\overline Y_{n,s}}\left(\frac{k}{n-1}-p_s\right)\hm=F_{\overline Z_{n,s}}\left(\frac{k}{n-1}-p_s\right)\hm-F_{\overline Y_{n,s}}\left(\frac{k}{n}-p_s\right)\hm=\\
	=P(B_{n, s, k}\cap A_{n, s, n})\hm\leqslant P(B_{n, s, k})\leqslant\max_kP(B_{n, s, k}).
	\end{multline*}
	Так как  обе функции \(F_{\overline Z_{n,s}}\) и \(F_{\overline Y_{n,s}}\) постоянны на полуинтервалах \(\left[\frac{k-1}{n-1}-p_s,\frac{k}{n}-p_s\right)\) и \(\left[\frac{k}{n}-p_s,\frac{k}{n-1}-p_s\right)\) и на лучах \((-\infty, -p_s)\) и \([1-p_s, +\infty)\), то в силу предыдущих выкладок для всех \(x\in\mathbb R\) верно \(|F_{\overline Z_{n,s}}(x) - F_{\overline Y_{n,s}}(x)|\hm\leqslant\max\limits_kP(B_{n, s, k})\). В силу леммы \ref{lem::base} имеем \(\max\limits_kP(B_{n, s, k})\to0\) при \(n\to\infty\), что влечет \(F_{\overline Z_{n,s}} - F_{\overline Y_{n,s}}\toto0\) при \(n\hm\to\infty\) на \(\mathbb R\).
\qed

\begin{proof}[Доказательство теоремы \ref{thm::LikeCLT}]

Воспользуемся леммой \ref{lem::base} и обозначим через \(\sigma^2\eqdef\mathrm DY_{n, i}\hm=\mathrm DX_{n, s, i}\hm=p_s(1-p_s )\) дисперсию случайной величины \(X_{n, s, i}\), а через \(\rho\eqdef\mathrm E|Y_{n,i}|^3
<\infty\).
Введем также случайные величины 
\begin{equation*}
\hat Z_{n, s}\hm=\dfrac{Y_{n, 1}+\cdots+Y_{n, n-1}}{\sigma\sqrt{n-1}}\hm=\dfrac{\sqrt{n-1}}{\sigma}\overline Z_{n, s}\qquad\text{ и }\qquad\hat Y_{n, s}=\frac{\sqrt{n-1}}{\sigma}\overline Y_{n,s}.
\end{equation*}

Так как по лемме \ref{lem::UnionIndependance} любой набор из \(n\hm-1\) случайных величин из набора \(X_{n, s, 1},\ldots,X_{n, s, n}\) независим в совокупности, то то же самое верно для величин из набора \(Y_{n, 1},\ldots, Y_{n, n}\). Кроме того, последние случайные величины одинаково распределены и \(\mathrm E |Y_{n, k}^3|\hm=\rho\hm<\infty\).
Применив неравенство Берри--Эссеена~\cite[том 2, стр. 542, \S 5, теор.1]{Feller:B-E_inq} к набору величин \(Y_{n, 1},\ldots, Y_{n, n-1}\), получим для всех \(x\in\mathbb R\)
\begin{equation*}
\left|F_{\hat Z_{n, s}}(x)-  N(x)\right|\leqslant\frac{3\rho}{\sigma^3\,\sqrt{n-1}}.
\end{equation*}
В частности, 
\(F_{\hat Z_{n, s}}(x)\toto   N(x)\) при \(n\to\infty\) на \(\mathbb R\) для каждого фиксированного \(s\).
Но
\begin{equation*}
F_{\hat Z_{n, s}}(x) \hm- F_{\hat Y_{n, s}}(x)\hm=F_{\overline Z_{n,s}}\left(\dfrac{x\sigma}{\sqrt{n-1}}\right) \hm- F_{\overline Y_{n,s}}\left(\dfrac{x\sigma}{\sqrt{n-1}}\right)\toto0,
\end{equation*}
где последнее стремление следует из леммы \ref{lem::UnifLim}.
Так как \(F_{\hat Z_{n, s}}(x)\toto N(x)\), то и \(F_{\hat Y_{n, s}}(x)\toto N(x)\).
Так как \(N(\frac{\sqrt{n}}{\sqrt{n-1}}x)\toto N(x)\), то \(F_{\frac{\sqrt{n}}{\sigma}\overline Y_{n,s}}(x)\toto N(x)\) при \(n\to\infty\).
\end{proof}
\section{Выражение для вероятности протекания жидкостей.}\label{sec::big_sum}
Предложим формализм, который, надеемся, может оказаться полезным для решения задачи \ref{hyp::IndependInLimit} ниже.

Зафиксируем число \(s\geqslant2\) и множество \(M_s\).
Рассмотрим \(\mathrm{S} = \{s_1, \ldots, s_k\}\) --- (непустой, неупорядоченный) \textit{набор} попарно различных путей от центра до границы. Так как существует конечное количество путей, то и их наборов тоже конечное количество. Обозначим количество путей в наборе решеткой: \(\#\mathrm{S}\eqdef k\). Обозначим количество клеток шестиугольника, занимаемых набором путей, через \(|\mathrm{S}|\eqdef|s_1\cup\ldots\cup s_k|\).
%
	
\begin{idea}
	Верно следующее равенство:
	\begin{equation*}
	p_s = -\sum_{ \mathrm{S}\text{ --- набор путей}}(-1)^{\#\mathrm{S}}2^{-|\mathrm{S}|}.
	\end{equation*}
\end{idea}

\begin{proof}
	Рассмотрим все пары \((f, \mathrm S)\) из раскраски \(f\) в два цвета и набора \(\mathrm S\) путей такие, что все клетки, по которым проходят пути из набора, пропускают первую жидкость, то есть для каждой клетки \(v\hm\in s_1\cup\ldots\cup s_k\) выполнено \(f_1(v)\hm=1\). Заметим, что для каждой такой пары первая жидкость протекает от центра до границы для раскраски \(f\), так как набор путей \(\mathrm S\) не пуст по определению. Припишем каждой паре \((f, \mathrm S)\) знак \((-1)^{\#\mathrm{S}}\). Посчитаем сумму знаков всех пар двумя способами.
	
	Для данного набора путей \(\mathrm{S}\) есть \(2^{m-|\mathrm{S}|}\) пар \((f, \mathrm{S})\), так как есть \(m-|\mathrm{S}|\) клеток, не принадлежащих ни одному из путей, каждую из которых можно раскрасить в \(2\) цвета. Поэтому сумма знаков всех рассматриваемых пар равна
	\begin{equation}\label{eqt::longcounting}
	\sum_{ \mathrm{S}\text{ --- набор путей}}(-1)^{\#\mathrm{S}}2^{m-|\mathrm{S}|}.
	\end{equation} 
	С другой стороны, для фиксированной раскраски \(f\), если первая жидкость не протекает от центра до границы для \(f\), то нет ни одной пары вида \((f, \mathrm S)\) среди рассматриваемых. Пусть теперь первая жидкость протекает до границы. Пусть есть ровно \(k\) различных несамопересекающихся путей \(s\) таких, что для всех клеток \(v\in s\) первая жидкость протекает через эти клетки, то есть \(f_1(v)=1\). Тогда пар \((f, \mathrm S)\) таких, что \(\#\mathrm S = l\), среди рассматриваемых ровно \({k \choose l}\). Тогда для фиксированного \(f\) сумма знаков всех пар \((f, \mathrm S)\) равна
	\begin{equation} \label{eqt:BinomEq}
	-{k \choose 1} + {k\choose 2} - \cdots + (-1)^{k}{k\choose k} = -1.
	\end{equation}
	Тогда сумма знаков всех рассматриваемых пар равна \(-2^{m}p_s\). Приравняв это число выражению \eqref{eqt::longcounting} и разделив на \(-2^{m}\), получаем требуемое равенство.
\end{proof}


Зафиксируем \(s\) и \(n = 3\). Для набора путей \(\mathrm{S} = \{s_1, \ldots, s_k\}\) обозначим множество клеток, по которым эти пути проходят, через \(\Cells(\mathrm S)\eqdef s_1\cup\ldots\cup s_k\). Для каждой тройки наборов путей \(\{\mathrm S_1, \mathrm S_2, \mathrm S_3\}\) обозначим \(T_0\hm=M_s\setminus(\Cells(\mathrm S_1) \cup \Cells(\mathrm S_2)\cup\Cells(\mathrm S_3))\), \(T_1\hm=\bigcup\limits_{\sigma\in S_3}\Cells(\mathrm S_{\sigma(1)})\setminus(\Cells(\mathrm S_{\sigma(2)})\cup\Cells(\mathrm S_{\sigma(3)}))\) и \(T_{2-3}\hm= \Cells(\mathrm S_1)\cup\Cells(\mathrm S_2)\cup\Cells(\mathrm S_3)\setminus T_1\).
\begin{idea}
	Верно следующее равенство:
	\begin{equation*}
	P(B_{3, s, 3}) = -\sum_{\mathrm S_1, \mathrm S_2, \mathrm S_3\text{ --- наборы путей}}(-1)^{\#\mathrm S_1+\#\mathrm S_2+\#\mathrm S_3}2^{-|T_1|}4^{-|T_{2-3}|}.
	\end{equation*}
\end{idea}
\begin{proof}
	Доказательство аналогично предыдущему.
	Рассмотрим все четверки \((f, \mathrm S_1, \mathrm S_2, \mathrm S_3)\) из раскраски \(f\) в четыре цвета и наборов путей \(\mathrm S_1, \mathrm S_2, \mathrm S_3\) такие, что все клетки, по которым проходят пути из набора \(S_i\), пропускают \(i\)-ую жидкость, то есть для каждой клетки \(v\hm\in \Cells(S_i)\) выполнено \(f_i(v)\hm=1\) для \(i\in\{1,2,3\}\). Заметим, что для каждой такой четверки первые три жидкости протекают от центра до границы для раскраски \(f\), так как наборы путей \(\mathrm S_i\) не пусты по определению. Припишем каждой четверке \((f, \mathrm S_1, \mathrm S_2, \mathrm S_3)\) знак \((-1)^{\#\mathrm S_1+\#\mathrm S_2+\#\mathrm S_3}\). Посчитаем сумму знаков всех четверок двумя способами.
	
	Для данной тройки наборов путей \((\mathrm S_1, \mathrm S_2, \mathrm S_3)\) есть \(2^{|T_1|}4^{|T_0|}\) четверок \((f, \mathrm{S})\), так как есть \(|T_0|\) клеток, не принадлежащих ни одному из путей, каждую из которых можно раскрасить в \(4\) цвета и \(|T_1|\) клеток, принадлежащих путям ровно из одного набора, каждую из которых можно раскрасить в \(2\) цвета. Поэтому сумма знаков всех рассматриваемых пар равна
	\begin{equation}\label{eqt::longcountingn3}
	\sum_{ \mathrm{S}\text{ --- набор путей}}(-1)^{\#\mathrm{S}}2^{|T_1|}4^{|T_0|}.
	\end{equation} 
	С другой стороны, для фиксированной раскраски \(f\), если какая-либо жидкость не протекает от центра до границы для \(f\), то нет ни одной четверки вида \((f, \mathrm S_1, \mathrm S_2, \mathrm S_3)\) среди рассматриваемых. Пусть теперь каждая жидкость протекает до границы. Пусть есть ровно \(k_1, k_2, k_3\) различных несамопересекающихся путей \(s\) таких, что для всех клеток \(v\in s\) первая, вторая и, соответственно третья жидкость протекает через эти клетки, то есть \(f_1(v)=1, f_2(v)=1\) или \(f_3(v)=1\) соответственно. Тогда четверок \((f, \mathrm S_1, \mathrm S_2, \mathrm S_3)\) таких, что \(\#\mathrm S_1 = l_1,  \#\mathrm S_2 = l_2 \text{ и } \#\mathrm S_2 = l_3\) среди рассматриваемых ровно \({k_1\choose l_1}{k_2\choose l_2}{k_3\choose l_3}\). Тогда, применив равенство \eqref{eqt:BinomEq} три раза, получим, что для фиксированного \(f\) сумма знаков всех четверок \((f, \mathrm S_1, \mathrm S_2, \mathrm S_3)\) равна
	\begin{equation*}
	\sum_{l_1=1}^{k_1}\sum_{l_2=1}^{k_2}\sum_{l_3=1}^{k_3}(-1)^{l_1+l_2+l_3}{k_1\choose l_1}{k_2\choose l_2}{k_3\choose l_3} = \sum_{l_1=1}^{k_1}\sum_{l_2=1}^{k_2}(-1)^{l_1+l_2}{k_1\choose l_1}{k_2\choose l_2}(-1) = \cdots =-1.
	\end{equation*}
	Тогда сумма знаков всех рассматриваемых пар равна \(-4^{m}P(B_{3,s,3})\). Приравняв это число выражению \eqref{eqt::longcountingn3} и разделив на \(-4^{m}\), получаем требуемое равенство.
\end{proof}

\section{Численный эксперимент}\label{sec::exper}

Нами была написана программа, перебирающая случайный набор раскрасок множества \(M_s\) без центральной клетки в \(2^{n-1}\) цветов и проверяющая протекание каждой жидкости в каждой из раскрасок. После чего подсчитывалась количество протекающих жидкостей. Код с комментариями можно найти в \cite{MyCode}.

\textit{Приблизительной вероятностью события} \(B_{n, s, k}\) считается доля раскрасок из числа рассмотренных, для которых протекло ровно \(k\) жидкостей. Так как рассматриваемые раскраски выбирались равновероятно из всего множества раскрасок, то примерные вероятности событий действительно близки к \(P(B_{n,s,k})\) по закону больших чисел.

\begin{wrapfigure}[17]{r}{260pt}\label{fig::hp_ind_n3}
	\includegraphics[scale=0.67]{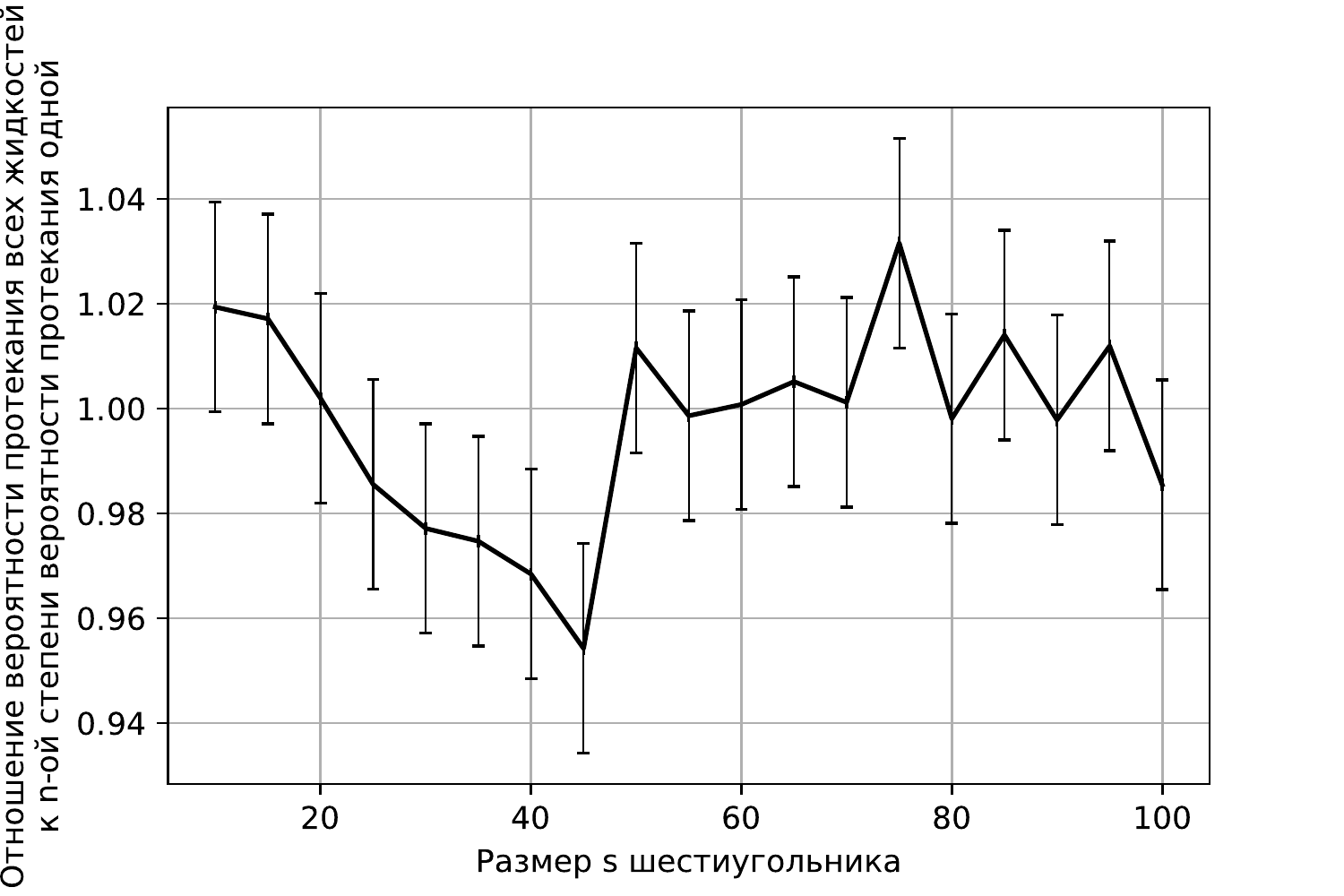}
	\caption{К задаче \ref{hyp::IndependInLimit} при \(n=3\).
		Вероятность отклонения больше указанного не превышает \(0.05\).}
\end{wrapfigure}

К сожалению, точный подсчет значений вероятностей представляется невозможным для современных компьютеров. Например, для \(s=5\) и \(n=3\) нужно покрасить \(60\) клеток в \(4\) цвета. Тогда существует \(2^{120}\hm>10^{36}\) различных раскрасок. 

Для начала было проверено утверждение основной теоремы \ref{thm::LikeCLT}. Были построены графики плотности вероятности для \(s=100\) и \(n=1,2,\ldots,25\). На полученном рисунке \ref{fig::MainRes} также изображена плотность вероятности нормального распределения.
Также были построены графики функции \(\dfrac{P(\bigcap_{i=1}^n A_{n, s, i})}{p_s^n}\), измеряющей степень зависимости событий протекания в совокупности, для разных \(s\). Как видно на рисунке \ref{fig::hp_ind_n3} отклонение этого отношения от \(1\) составляет не более \(0.05\) при \(n=3\). Аналогичные результаты получаются и для других значений \(n\), но они менее наглядные, так как с ростом \(n\) точность вычислений падает. Так, например, для \(n = 11\) наибольшее отклонение от \(1\) составляет уже \(0.14\).

На рисунке \ref{fig::s_to_infty} показаны графики вероятности \(P(B_{3, s, k})\) в зависимости от \(s\). Видно, что эти значения стремятся к нулю не сразу, а перед этим они ``меняют взаимное расположение''. А именно, для достаточно больших \(s\), чем больше \(k\), тем меньше вероятность, что протечет ровно \(k\) жидкостей из \(n\), что соответствует нашей интуиции.

На основе полученных результатов сформулированы следующие гипотезы и задачи.
\begin{problem}\label{hyp::IndependInLimit}
	Верно ли, что события протекания различных жидкостей становятся независимы в совокупности в пределе \(s\to\infty\), то есть для всех \(n\geqslant3\)
	\begin{equation*}
	\lim\limits_{s\to\infty}\frac{P(\bigcap_{i=1}^n A_{n, s, i})}{p_s^n} = 1?
	\end{equation*}
\end{problem}

Заметим, что любой поднабор из не более чем \(n-1\) таких событий независим в совокупности по лемме \ref{lem::UnionIndependance}.

\begin{problem}
	Верно ли, что для всех \(0\leqslant k\leqslant n\geqslant3\) имеем \(P(B_{n,s,k})\sim{n\choose k}p_s^k(1-p_s)^{n-k}\) при \(s\to\infty\)?
\end{problem}

Против утверждения задачи \ref{hyp::IndependInLimit} приводится аргумент К. Изъюрова в статье Новикова \cite[\S 4.4]{Novikov}. Мы не приводим этот аргумент, так как для понимания он требует прочтения статьи \cite{Novikov} целиком.

\begin{hypo} \label{hp::grid}
	Для всех \(n, i<j\) и достаточно больших \(s\)  имеем \(P(B_{n, s, i})>P(B_{n, s, j})\).
\end{hypo}

\section{Благодарности}
Автор выражает благодарность своему научному руководителю, Скопенкову Михаилу Борисовичу, за плодотворные обсуждения и помощь в написании статьи. Также автор благодарит Л.~Петрова, прочитавшего данную работу и приславшего ценные комментарии, и М.~Христофорова за ценные замечания.

\end{document}